\title{On {\em k}-connectivity oracles in {\em k}-connected graphs}
\titlerunning{On {\em k}-connectivity oracles}
\author{Zeev Nutov}{The Open University of Israel}{nutov@openu.ac.il}
{https://orcid.org/0000-0002-6629-3243}{}
\authorrunning{Zeev Nutov}
\begin{document}

\maketitle

\newcommand{\sem}    {\setminus}
\newcommand{\subs}   {\subseteq}
\newcommand{\empt}  {\emptyset}

\newcommand{\f}   {\frac}

\newcommand{\Aa}  {{\cal A}}
\newcommand{\Bb}  {{\cal B}}
\newcommand{\CC}  {{\cal C}}
\newcommand{\DD}  {{\cal D}}
\newcommand{\FF}  {{\cal F}}
\newcommand{\HH}  {{\cal H}}
\newcommand{\PP}  {{\cal P}}
\newcommand{\RR}  {{\cal R}}
\newcommand{\Sa}   {{\cal S}}
\newcommand{\TT}   {{\cal T}}
\newcommand{\LL}   {{\cal L}}

\newcommand{\con}   {{\sc pcon$_k(s,t)$}} 
\newcommand{\cut}     {{\sc pcut$_k(s,t)$}} 
\newcommand{\econ} {{\sc con$_k(s,t)$}}
\newcommand{\mcut}  {{\sc cut$_k(s,t)$}}  

\newcommand{\al}   {\alpha}
\newcommand{\be}   {\beta}
\newcommand{\ga}   {\gamma}
\newcommand{\de}   {\delta}
\newcommand{\ka}   {\kappa}
\newcommand{\la}   {\lambda}
\newcommand{\eps}   {\epsilon}

\newcommand{\Om}   {\Omega}

\newcommand{\A}   {\mathbb{A}}
\newcommand{\B}   {\mathbb{B}}
\newcommand{\C}   {\mathbb{C}}
\newcommand{\p}   {\partial}

\def\lca   {{\sf lca}} 
\def\dist  {{\sf dist}} 
\def\o     {{\sf o}}
\newcommand{\Ga}  {\Gamma}

\newcommand {\ignore} [1] {}

\keywords{node connectivity, minimum cuts, data structure, connectivity queries}

\begin{abstract}
A {\bf $k$-connectivity oracle} for a graph $G=(V,E)$ is a data structure that 
given $s,t \in V$ determines whether there are at least $k+1$ internally disjoint $st$-paths in $G$. 
For undirected graphs, Pettie, Saranurak \& Yin \cite{PSY} proved that 
any $k$-connectivity oracle requires $\Omega(kn)$ bits of space.
We will show by a very simple proof that this is so even if $G$ is $k$-connected, 
answering an open question posed in \cite{PSY}. 
\end{abstract}

\section{Introduction} \label{s:intro}

Given a directed or undirected graph $G=(V,E)$ let 
$\ka(s,t)$ denote the maximum number of internally-disjoint $st$-paths in $G$;
$G$ is {\bf $k$-connected} if $\ka(s,t) \geq k$ for all $(s,t) \in V \times V$.
A {\bf $k$-connectivity oracle} for $G$ is a data structure that  
given $s,t \in V$ determines whether  $\ka(s,t) > k$.
We will consider the following question -- what is the minimum number of bits that 
a $k$-connectivity oracle of an $n$-node graph needs?

It is not hard to establish that for both directed and undirected graphs 
at least $n$ bits are needed even for $k=0$. 
For undirected graphs and $k=0$ we just need to give a different label for every connected component,
and thus $O(n \log n)$ bits suffice.
More generally, in the case of {\em edge-connectivity}, 
the relation ``there are $k$-edge-disjoint $st$-paths'' is an equivalence, 
which implies that also in this case $O(n \log n)$ bits suffice. 
This bound is also achievable by the Gomory-Hu cut-tree \cite{GH}, 
that can also return a minimum $st$-edge-cut.
In contrast, at least $n^2/4$ bits are needed to answer reachability queries (the case $k=0$)  
for a directed acyclic graph (DAG), c.f. \cite{KR-posets,MN,DGJ};
this follows from the fact that the number of distinct posets on $n$ elements 
is at least $2^{n^2/4}$ \cite{KR-posets}.

Now let us consider node-connectivity in undirected graphs.
It is well known that any graph $G$ has for any $k$ has 
a ``sparse certificate'' -- a spanning subgraph with $O(kn)$ edges,  
that preserves pairwise connectivities (and also pairwise minimum node-cuts) 
of $G$ up to $k$, see \cite{NI}. 
Such a subgraph can serve as a $k$-connectivity oracle.
Pettie, Saranurak \& Yin \cite{PSY} proved that this is essentially the best possible -- 
a $k$-connectivity oracle for an undirected graph needs $\Omega(kn)$ bits of space.
They asked whether $\Omega(kn)$ bits are still necessary if $G$ is $k$-connected.
We will answer this question by proving the following. 

\begin{theorem} \label{t1}
For any integers $k \ge 1$ and $1 \le p \le 2^{k^2/4}$ there exists 
a $k$-connected graph $G=G(k,p)$ with $n=k(p+1)$ nodes 
such that any $k$-connectivity oracle for $G$ requires at least $p \cdot k^2/4 = \left(1-\f{1}{p+1}\right) \cdot kn/4$ bits of space.
Consequently, a $k$-connectivity oracle for a $k$-connected graph requires $\Omega(kn)$ bits of space.
\end{theorem}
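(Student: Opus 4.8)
The plan is to reduce the bound to a counting statement and then to a construction. For the reduction it suffices to produce a family $\FF=\FF(k,p)$ of $k$-connected $n$-vertex graphs with $|\FF|\ge 2^{pk^2/4}$ such that no two members of $\FF$ have the same set $\{(s,t):\ka(s,t)>k\}$ of accepted queries. Indeed, a $k$-connectivity oracle is a function of the input graph that determines the answer to every query, hence determines this set; if two graphs of $\FF$ led to the same memory configuration they would have the same accepted set, so the oracle must use at least $|\FF|$ distinct configurations, i.e.\ occupy at least $\log_2|\FF|\ge pk^2/4$ bits. Since $n=k(p+1)$ turns $pk^2/4$ into $\bigl(1-\tfrac1{p+1}\bigr)\tfrac{kn}{4}$, which is at least $kn/8$ for every $p\ge1$, letting $n$ range over the admissible multiples $k(p+1)$ of $k$ yields the stated $\Omega(kn)$ bound.

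To build $\FF$ I will reuse the entropy source behind the poset bound recalled in the introduction: there are at least $2^{k^2/4}$ distinct partial orders on a $k$-element set (for concreteness one may use the height-two posets, i.e.\ arbitrary bipartite graphs between classes of size $\lceil k/2\rceil$ and $\lfloor k/2\rfloor$, so that no transitivity need be checked). The graph $G(k,p)$ will have $p+1$ blocks of $k$ vertices: one \emph{core} block and $p$ \emph{gadget} blocks, where gadget block $j$ is wired to the core so as to encode one such bipartite graph $R_j$, the $p$ encodings being independent; this produces $(2^{k^2/4})^p=2^{pk^2/4}$ graphs and explains the split of the $n=k(p+1)$ vertices into $p+1$ blocks. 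Two properties then have to be verified. First, $k$-connectivity: for every pair of vertices one must exhibit $k$ internally disjoint paths, and the core block is designed precisely so that — no matter what the $R_j$'s are — these paths always exist, routed mostly through the core. Second, and this is the heart of the matter, faithful decoding: a single query between two designated vertices of gadget block $j$ must reveal one bit of $R_j$, the answer being $\ka>k$ exactly when that bit is $1$; by Menger's theorem this amounts to exhibiting a minimum $st$-cut of size exactly $k$ when the bit is $0$, and $k$ disjoint paths together with one extra path when the bit is $1$.

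The step I expect to be the obstacle is the decoding. The obvious attempt — let the core be a $k$-clique $Z$, join every gadget vertex to all of $Z$, and place the edges of $R_j$ inside block $j$ — fails: one computes $\ka_G(s,t)=k+\ka_{R_j}(s,t)$ for $s,t$ in block $j$, so a query does not report ``is $st$ an edge of $R_j$'' but ``are $s,t$ in the same connected component of $R_j$'', because a path may chain through several edges of $R_j$; this collapses the recoverable information to $O(k\log k)$ bits per block instead of the required $k^2/4$. The construction must therefore incorporate a mechanism that breaks such chaining — that keeps the relevant minimum $st$-cut of size exactly $k$ whenever the queried bit is $0$, robustly against the other edges of $R_j$ and against the other blocks — and the technical core of the proof is to check, again via Menger, that this mechanism behaves as intended while not disturbing $k$-connectivity. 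Granting this, distinct tuples $(R_1,\dots,R_p)$ yield distinct accepted sets, so $\FF$ has the two required properties and $|\FF|=2^{pk^2/4}$, which completes the argument.
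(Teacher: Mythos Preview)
Your counting reduction and the $p{+}1$-block layout are exactly right, and your diagnosis of why the naive attempt fails is accurate.  But the proposal stops precisely where the proof begins: you never say what the ``mechanism that breaks such chaining'' is, and everything after ``Granting this'' is conditional on an unspecified construction.  That is the whole content of the theorem; without it you have a plan, not a proof.

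The paper's fix is not to suppress chaining but to redirect it.  Make \emph{both} the core $R$ and each gadget block $R_i$ into $k$-cliques, add the perfect matching $\{vv'_i:v\in R\}$, and encode a \emph{directed} graph $D_i$ on $R$ by placing, for each arc $u\!\to\!v$ of $D_i$, the undirected edge $uv'_i$.  This is the connectivity-shift graph of~\cite{LN}, and its key property (Lemma~\ref{l:LN}) is $\ka_G(u,v'_i)=\ka_{D_i}(u,v)+k$.  So the $k$-connectivity query on the pair $(u,v'_i)$ returns exactly ``is $v$ reachable from $u$ in $D_i$'' --- i.e.\ the poset relation, which carries the full $k^2/4$ bits.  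In your language: the chained quantity is now directed reachability, and that \emph{is} the information you want to recover, so there is nothing to break.  (With your height-two $D_i$'s, reachability from the left part to the right part coincides with edge existence, so you even get the bit-by-bit decoding you were aiming for.)

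The $k$-connectivity of the glued graph is then a two-line separator argument: each $G[R\cup R_i]$ is $k$-connected (Corollary~\ref{c:shift}), so any cut $(S,C,T)$ with $|C|<k$ cannot have both $S$ and $T$ meeting $R$, nor can $S$ (resp.\ $T$) meet $R$ while $T$ (resp.\ $S$) meets some $R_i$; hence $R\subseteq C$, contradicting $|C|<k$.
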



For $p=1$ we get $k=n/2$ while for $p=2^{k^2/4}$ we get $k=2\sqrt{\lg p} \approx 2 \sqrt{\lg n}$. 
Hence in terms of $n$ our construction applies (roughly) in the range $2\sqrt{\lg n} \le k \le n/2$.
We also note that our construction is substantially simpler than the one in \cite{PSY}.

Since distinguishing between $\ka(s,t)=k$ and $\ka(s,t)>k$ requires $\Omega(kn)$ bits of space
even for $k$-connected graphs, our construction also rules out existence of compact approximate 
$k$-connectivity oracle, see \cite{PSY} for details.  

For various $k$-connectivity oracles for general and $k$-connected graphs see \cite{HL, IN, PY, N-ds, PSY}.

\section{Proof of Theorem~\ref{t1}} \label{s2}

We will use a construction of \cite{LN}, that for any digraph $D=(R,A)$ constructs 
an undirected graph $G=(R \cup R',E)$ (where $R'$ is a copy of $R$) that ``shifts'' the connectivities by exactly $|R|$. 
Formally, this undirected graph is defined as follows. 

\begin{definition}
The {\bf connectivity-shift graph} of a directed graph $D=(R,A)$ is an undirected graph $G=(V,E)$ obtained as follows.
\begin{enumerate}
\item
Add a copy $R'$ of $R$ and 
replace every arc $uv \in A$ by the undirected edge $uv'$, where $v' \in R'$ denotes the copy of $v \in R$.
\item
Add a clique on each of $R,R'$ and the matching $M=\{vv':v \in R\}$.
\end{enumerate}
\end{definition}

Let $\ka(G)=\min\{\ka(s,t):(s,t) \in V \times V\}$ denote the (node-)connectivity 
of a directed or an undirected graph $G$.
The following was proved in \cite{LN}.

\begin{lemma} [\cite{LN}] \label{l:LN}
Let $G$ be the connectivity-shift graph of a digraph $D=(R,A)$. Then the following holds.
\begin{enumerate}[(i)]
\item
$\ka_G(u,v')=\ka_D(u,v)+|R|$ for all $u,v \in R$.
\item
$\ka(G)=\ka(D)+|R|$. 
\end{enumerate}
\end{lemma}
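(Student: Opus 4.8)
The plan is to obtain both parts from one structural remark about vertex cuts of $G$. Write $m=|R|$ and identify $R'$ with $R$ via $v\mapsto v'$. The only edges of $G$ joining $R$ and $R'$ are the matching edges $xx'$ and the arc edges $xy'$ (one for each $xy\in A$); consequently, if a set $S\subs R$ is the part of $R$ lying on one side of a cut of $G$, then the cut must contain, inside $R'$, the copies of every vertex of $S$ together with the copies of every $D$-out-neighbour of $S$. Everything else is vertex-counting.

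For part (i) fix $u,v\in R$. The case $uv\in A$ reduces to the case $uv\notin A$: delete the arc $uv$ from $D$ and the edge $uv'$ from $G$ --- obtaining the connectivity-shift graph of the smaller digraph --- which by Menger's theorem lowers each of $\ka_D(u,v)$ and $\ka_G(u,v')$ by exactly $1$. So assume $uv\notin A$; then $u,v'$ are non-adjacent in $G$ and $\ka_G(u,v')$ equals the minimum size of a $u$-$v'$ vertex cut. For the upper bound, take a minimum $u$-$v$ cut $C$ of $D$, let $U$ be the set of vertices reachable from $u$ in $D-C$, and verify that $Z=(R\sem U)\cup\{x':x\in U\cup C\}$, of size $(m-|U|)+(|U|+|C|)=m+\ka_D(u,v)$, separates $u$ from $v'$: after deleting $Z$ the surviving part of $R$ is exactly the clique $U$, all of whose edges to $R'$ have been removed by the structural remark. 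For the lower bound, take any $u$-$v'$ cut $Z$: since $R,R'$ are cliques, $u$ lies in the component of $G-Z$ containing all of $R\sem Z$ and $v'$ in the one containing all of $R'\sem Z$, and as these components differ, no $R$-$R'$ edge joins them; the structural remark then forces $Z\cap R'$ to contain the copies of $(R\sem Z)\cup N^+_D(R\sem Z)$, so $|Z|\ge m+|W|$ where $W:=N^+_D(R\sem Z)\sem(R\sem Z)$. Finally $W$ is an internal $u$-$v$ cut of $D$ --- every arc leaving $R\sem Z$ enters $(R\sem Z)\cup W$, and $v\notin(R\sem Z)\cup N^+_D(R\sem Z)$ since $v'\notin Z$ --- hence $|W|\ge\ka_D(u,v)$, and the two bounds coincide.

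For part (ii), $\ka(G)\le\ka(D)+m$ is immediate from (i) applied to a pair of vertices of $R$ attaining $\ka(D)$. For the reverse inequality I would run the same lower-bound argument globally. Let $Z$ be a vertex cut of $G$ with sides $P,Q$. Because $R$ and $R'$ are cliques, one of $P,Q$ lies entirely inside $R$ and the other entirely inside $R'$; replacing $D$ by its reverse if needed (the connectivity-shift graph of the reverse digraph is $G$ with $R$ and $R'$ interchanged, and has the same $\ka$), we may assume $P\subs R$ and $Q\subs R'$. Exactly as before, the cut contains in $R'$ the copies of $P\cup N^+_D(P)$, giving $|Z|\ge(m-|P|)+(|P|+|W|)=m+|W|$ with $W=N^+_D(P)\sem P$; and $W$ is a $u$-$v$ cut of $D$ for any $u\in P$ and any $v$ with $v'\in Q$, so $|W|\ge\ka(D)$.

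The step I expect to need the most care is the structural remark itself --- turning ``no surviving $R$-$R'$ edge runs between the two sides of the cut'' into the exact inclusion $Z\cap R'\supseteq\{z':z\in(R\sem Z)\cup N^+_D(R\sem Z)\}$ --- together with checking that the set $W$ it yields is an \emph{internal} cut of $D$, i.e.\ that $u,v\notin W$; this is precisely where the hypotheses $u\notin Z$ and $v'\notin Z$ enter. Granting these, the remaining inequalities are routine counting of $|Z|=|Z\cap R|+|Z\cap R'|$, and the $uv\in A$ case of (i) is the one-line reduction above.
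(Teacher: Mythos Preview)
The paper does not prove this lemma; it is quoted from \cite{LN} with the sentence ``The following was proved in \cite{LN}'' and no argument is given, so there is nothing in the present paper to compare your attempt against.

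For what it is worth, your proof is correct and self-contained. The reduction of the adjacent case in (i) is valid because the arc $uv$ (respectively the edge $uv'$) is a $u$--$v$ path with no internal vertices and hence can always be adjoined to any internally disjoint family, so deleting it lowers each connectivity by exactly one and $G-uv'$ is indeed the connectivity-shift graph of $D-uv$. In the non-adjacent case your upper-bound separator $Z=(R\setminus U)\cup\{x':x\in U\cup C\}$ works as claimed (every edge leaving $U$ lands in $\{x':x\in U\cup C\}$), and your lower bound is the right Menger computation: from $u\notin Z$, $v'\notin Z$ and the clique structure you correctly extract $Z\cap R'\supseteq\{z':z\in (R\setminus Z)\cup N^+_D(R\setminus Z)\}$, and the set $W=N^+_D(R\setminus Z)\setminus(R\setminus Z)$ is an internal $u$--$v$ cut in $D$ precisely because $v'\notin Z$ forces $v\notin (R\setminus Z)\cup N^+_D(R\setminus Z)$. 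For (ii), the observation that the connectivity-shift graph of the reverse digraph is $G$ with the roles of $R$ and $R'$ interchanged is correct and, together with $\kappa(D^{\mathrm{rev}})=\kappa(D)$, legitimises the ``WLOG $P\subseteq R$'' step; the rest is the same counting as in (i). Your write-up could therefore stand in for the citation.
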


As was mentioned, supporting reachability queries in an $n$-node DAG requires 
at least $n^2/4$ bits of space, c.f \cite{KR-posets,MN,DGJ}.
Note that $\ka(D)=0$ for any DAG $D$.
Combining with Lemma~\ref{l:LN} we get the following.

\begin{corollary} \label{c:shift}
Let $G$ be the connectivity-shift graph of a $k$-node DAG $D=(R,A)$. Then:
\begin{enumerate}
\item
$\ka_G(u,v')=\ka_D(u,v)+k$ for all $u,v \in R$.
\item
$G$ has $2k$ nodes and $\ka(G)=k$.
\item
Supporting $k$-connectivity queries in $G$ requires at least $k^2/4$ bits of space.
\end{enumerate}
\end{corollary}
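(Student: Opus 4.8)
The plan is to dispatch items~1 and~2 directly from Lemma~\ref{l:LN}, and to establish the bit lower bound in item~3 by reducing it to the known $\Om(n^2)$-space lower bound for reachability oracles on $n$-node DAGs.

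First I would check items~1 and~2. Item~1 is nothing but Lemma~\ref{l:LN}(i) with $|R|=k$. For item~2, the connectivity-shift graph has node set $R\cup R'$ with $|R|=|R'|=k$, so it has $2k$ nodes; and since $\ka(D)=0$ for every DAG (as already noted), Lemma~\ref{l:LN}(ii) gives $\ka(G)=\ka(D)+k=k$, so $G$ is $k$-connected.

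The substance is item~3. The key remark is that $G$ encodes the reachability relation of $D$ inside its ``level-$k$'' connectivity structure: for $u,v\in R$, item~1 gives $\ka_G(u,v')=\ka_D(u,v)+k$, hence
\[
\ka_G(u,v')>k \iff \ka_D(u,v)\ge 1 \iff \text{$D$ has a directed $u$-to-$v$ path.}
\]
Therefore any $k$-connectivity oracle for $G$, queried only on pairs $(u,v')$ with $u,v\in R$, answers every reachability query in $D$. I would then fix a family $\DD$ of $k$-node DAGs on a common vertex set $R$ that have pairwise distinct reachability relations; the connectivity-shift graphs of the members of $\DD$ are then pairwise distinguished by the outputs their $k$-connectivity oracles produce on $(u,v')$-pairs, so the data structure must assume at least $|\DD|$ distinct values and hence occupy at least $\log_2|\DD|$ bits. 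To make $\DD$ large one takes the ``two-layer'' DAGs: split $R$ into $A$ and $B$ with $|A|=\lceil k/2\rceil$ and $|B|=\lfloor k/2\rfloor$, and let the arc set range over all subsets of $\{uv:u\in A,\ v\in B\}$; distinct arc sets give distinct reachability relations ($u$ reaches $v$ iff $uv$ is an arc), so $|\DD|=2^{\lceil k/2\rceil\lfloor k/2\rfloor}=2^{\lfloor k^2/4\rfloor}$, which is the claimed $k^2/4$ bits. (Equivalently, one may quote the cited lower bound on the number of partial orders on $k$ elements, these being exactly the reachability relations of $k$-node DAGs.)

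The one point that genuinely needs the machinery is that the reduction uses Lemma~\ref{l:LN}(i) as an \emph{exact} identity, a shift by precisely $k$: it is the equality (not merely $\ka_G(u,v')\ge\ka_D(u,v)+k$) that lets the single threshold test ``$\ka_G(u,v')>k$?'' recover ``$u$ reaches $v$ in $D$?''. Everything else --- the counting argument and the two-layer construction --- is routine, the only slack being that $\lfloor k^2/4\rfloor$ is written as $k^2/4$, which is immaterial for the $\Om(kn)$ conclusion (obtained in Theorem~\ref{t1} by gluing $p$ disjoint copies of this gadget).
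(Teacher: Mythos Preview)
Your proposal is correct and follows the same route as the paper: items~1 and~2 come straight from Lemma~\ref{l:LN} with $|R|=k$ and $\ka(D)=0$, and item~3 is obtained by observing that a $k$-connectivity oracle for $G$ answers reachability queries in $D$ via the exact shift identity, then invoking the $k^2/4$-bit lower bound for DAG reachability. The paper compresses all of this into the single sentence ``Combining with Lemma~\ref{l:LN}'' and cites \cite{KR-posets,MN,DGJ} for the reachability bound, whereas you spell out the reduction and supply the standard two-layer construction that underlies that cited bound; the content is the same.
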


Corollary~\ref{c:shift} already implies that
a data structure for answering $k$-connectivity queries 
for a $k$-connected graph with $n=2k$ nodes requires at least $k^2/4=kn/8$ bits of space.
We would like to show that this is so also when $n$ is much larger than $k$. 
A common technique to achieve this is just to take $p$ ``gadgets'' as in Corollary~\ref{c:shift}, 
of different DAGs that correspond to different posets; it is known that there are at least 
$2^{k^2/4}$ distinct posets on $k$ elements \cite{KR-posets}.
Then the graph will have $n=2pk$ nodes and the lower bound will be $pk^2/4 = n k/8$ bits.
This technique works for general graphs, namely, we get a simple proof that the lower bound of $kn/8$ bits 
applies for general graphs with $n=2pk$ nodes.
But it does not apply directly to $k$-connected graphs, as the obtained graph is not even connected.

To adopt this construction for $k$-connected graphs we modify it as follows. 
Let the $p$ gadget graphs have node sets $R \cup R_1, R \cup R_2, \ldots, R \cup R_p$.
We identify the set $R$ of all $p$ gadgets. 
The obtained graph $G$ has $n=k(p+1)$ nodes and a lower bound of 
$pk^2/4 = \frac{1-1/p}{4} \cdot kn$ space bits for $k$-connectivity oracle.
To finish the proof of Theorem~\ref{t1} we prove the following.

\begin{lemma}
The constructed graph $G$ is $k$-connected.
\end{lemma}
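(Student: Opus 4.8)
The plan is to verify the definition directly: I will show that $\ka_G(s,t) \ge k$ for every pair $s,t$ of vertices of $G$, splitting into cases according to which of the sets $R, R_1, \dots, R_p$ contain $s$ and $t$. The first step is to observe that for each $i$ the subgraph of $G$ induced on $R \cup R_i$ is exactly the $i$-th gadget $G_i$ (the connectivity-shift graph of the $i$-th DAG): identifying the copies of $R$ only merges the $p$ cliques on $R$ into a single clique, and every edge of any other gadget $G_j$ has an endpoint in $R_j$ and hence lies outside $R \cup R_i$. By Corollary~\ref{c:shift} each $G_i$ is $k$-connected, so $\ka_G(s,t) \ge \ka_{G_i}(s,t) \ge k$ whenever $s$ and $t$ both lie in some $R \cup R_i$ (in particular whenever $s$ or $t$ lies in $R$). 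This disposes of every pair except one of the form $s \in R_i$, $t \in R_j$ with $i \ne j$.

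For such a pair I would produce $k$ internally disjoint $st$-paths explicitly, using only the clique on each $R_i$ and the matching $M_i=\{vv_i':v\in R\}$, both of which are present in $G$ regardless of the DAGs. Write $s = a_i'$ with $a \in R$, let $w_1,\dots,w_{k-1}$ be the remaining $k-1$ vertices of $R_i$, and let $w_\ell$ be matched to $b_\ell \in R$; since $M_i$ is a perfect matching and $|R_i|=|R|=k$, we have $\{a,b_1,\dots,b_{k-1}\}=R$. Take the matching edge $s\,a$ together with the length-two paths $s\,w_\ell\,b_\ell$ (each using one clique edge of $R_i$ and one matching edge); these $k$ paths run from $s$ to the $k$ distinct vertices of $R$, are pairwise disjoint except at $s$, and their internal vertices $w_1,\dots,w_{k-1}$ all lie in $R_i$. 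Symmetrically I would build $k$ analogous paths from $t$ to all of $R$ with internal vertices in $R_j$.

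Finally I would re-index so that the $\ell$-th path from $s$ and the $\ell$-th path from $t$ share their endpoint $r_\ell \in R$, and concatenate each pair into an $st$-path through $r_\ell$. The internal vertices of the resulting $\ell$-th path lie in $(R_i\sem\{s\})\cup\{r_\ell\}\cup(R_j\sem\{t\})$; since $R_i$, $R$, $R_j$ are pairwise disjoint, the $s$-side halves are internally disjoint from one another, the $t$-side halves are internally disjoint from one another, and $r_1,\dots,r_k$ are distinct, so all $k$ paths are internally disjoint. Hence $\ka_G(s,t)\ge k$, and together with the first paragraph this gives $\ka(G)\ge k$. I do not expect a serious obstacle: the only points to check carefully are that the $k$ paths out of $s$ really reach \emph{all} of $R$ (forced by $|R_i|=|R|=k$ and $M_i$ being perfect) and that no two assembled paths share an internal vertex (which follows from the disjointness of $R_i$, $R$, $R_j$). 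One could instead obtain the required $s$--$R$ and $t$--$R$ fans abstractly from the $k$-connectivity of $G_i$ and $G_j$ via the fan form of Menger's theorem, but the explicit construction above is self-contained.
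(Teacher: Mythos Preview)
Your proof is correct, but it takes a genuinely different route from the paper's. The paper argues via the separator characterization: if $G$ were not $k$-connected there would be a partition $(S,C,T)$ of $V(G)$ with $|C|<k$ and no $S$--$T$ edge; since each induced subgraph $G[R\cup R_i]$ is $k$-connected, neither $S$ nor $T$ can meet $R$ (otherwise the other side would have to avoid $R$ and every $R_i$, hence be empty), which forces $R\subseteq C$ and $|C|\ge k$, a contradiction. This is a three-line argument that never constructs a single path.

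You instead verify $\ka_G(s,t)\ge k$ pair by pair, building an explicit Menger fan from $s\in R_i$ to $R$ through the clique on $R_i$ and the matching $M_i$, doing the same on the $t$-side in $R_j$, and splicing at $R$. This is longer but entirely self-contained, and it makes visible a point only implicit in the paper's proof: the $k$-connectivity of $G$ comes solely from the cliques and the matchings, with the DAG edges playing no role. One small wording slip: when you write ``every edge of any other gadget $G_j$ has an endpoint in $R_j$'', this is literally false for the clique edges on $R$, but you have already handled those in the preceding clause about merging cliques, so the conclusion that $G[R\cup R_i]=G_i$ is unaffected.
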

\begin{proof}
It is known that a graph $G$ is not $k$-connected if and only if 
there exists a proper partition $S,C,T$ of the node set of $G$ such that $|C| <k$ 
and such that $G$ has no $ST$-edge.
Suppose to the contrary that the constructed graph $G$ has such a partition.
If $S \cap R \ne \empt$ then $T \cap R=\empt$ and $T \cap R_i=\empt$ for all $i$, 
since for all $i$, the subgraph $G[R \cup R_i]$ induced by $R \cup R_i$ is $k$-connected;
this contradicts that $T \neq \empt$. 
Similarly, we cannot have $T \cap R \neq \emptyset$.
It follows therefore that $R \subs C$, giving the contradiction $|C| \geq |R|=k$. 
\end{proof}

This concludes the proof of Theorem~\ref{t1}. 


\end{document}